\newenvironment{proof}{{\noindent\it Proof}.\;}{\hfill $\square$\par}
\newtheorem{definition1}{\hspace{1em}Definition}
\newtheorem{theorem}{\hspace{0em}Theorem}
\newtheorem{remark}{\hspace{0em}Remark}
\newcommand{\PreserveBackslash}[1]{\let\temp=\\#1\let\\=\temp}
\newcolumntype{C}[1]{>{\PreserveBackslash\centering}p{#1}}
\newcolumntype{R}[1]{>{\PreserveBackslash\raggedleft}p{#1}}
\newcolumntype{L}[1]{>{\PreserveBackslash\raggedright}p{#1}}
\begin{document}
%
% paper title
% Titles are generally capitalized except for words such as a, an, and, as,
% at, but, by, for, in, nor, of, on, or, the, to and up, which are usually
% not capitalized unless they are the first or last word of the title.
% Linebreaks \\ can be used within to get better formatting as desired.
% Do not put math or special symbols in the title.
\title{Receiver-driven Video Multicast over NOMA Systems in Heterogeneous Environments}
%
%
% author names and IEEE memberships
% note positions of commas and nonbreaking spaces ( ~ ) LaTeX will not break
% a structure at a ~ so this keeps an author's name from being broken across
% two lines.
% use \thanks{} to gain access to the first footnote area
% a separate \thanks must be used for each paragraph as LaTeX2e's \thanks
% was not built to handle multiple paragraphs
%
\author{\IEEEauthorblockN{Xiaoda Jiang$^{\dag}$, Hancheng Lu$^{\dag}$, Chang Wen Chen$^{\ddag}$$^{\S}$, Feng Wu$^{\dag}$}% <-this % stops a space
\IEEEauthorblockA{$^{\dag}$School of Information Science and Technology, University of Science and Technology of China, China\\
$^{\ddag}$School of Science and Engineering, Chinese University of Hong Kong, Shenzhen, China\\
$^{\S}$Department of Computer Science and Engineering, State University of New York at Buffalo, USA\\
jxd95123@mail.ustc.edu.cn, hclu@ustc.edu.cn, chencw@cuhk.edu.cn, fengwu@ustc.edu.cn
}}
%\author{\IEEEauthorblockN{Xiaoda Jiang$^{\dag}$, Hancheng Lu$^{\dag}$}
%\\% <-this % stops a space
%\IEEEauthorblockA{$^{\dag}$Key Laboratory of Wireless-Optical Communications, Chinese Academy of Sciences,\\
%School of Information Science and Technology, University of Science and Technology of China, China\\
%jxd95123@mail.ustc.edu.cn, hclu@ustc.edu.cn
%}}

% make the title area
\maketitle
% As a general rule, do not put math, special symbols or citations
% in the abstract or keywords.
\begin{abstract}
Non-orthogonal multiple access (NOMA) has shown potential for scalable multicast of video data. However, one key drawback for NOMA-based video multicast is the limited number of layers allowed by the embedded successive interference cancellation algorithm, failing to meet satisfaction of heterogeneous receivers. We propose a novel receiver-driven superposed video multicast (Supcast) scheme by integrating Softcast, an analog-like transmission scheme, into the NOMA-based system to achieve high bandwidth efficiency as well as gradual decoding quality proportional to channel conditions at receivers. Although Softcast allows gradual performance by directly transmitting power-scaled transformation coefficients of frames, it suffers performance degradation due to discarding coefficients under insufficient bandwidth and its power allocation strategy cannot be directly applied in NOMA due to interference. In Supcast, coefficients are grouped into chunks, which are basic units for power allocation and superposition scheduling. By bisecting chunks into base-layer chunks and enhanced-layer chunks, the joint power allocation and chunk scheduling is formulated as a distortion minimization problem. A two-stage power allocation strategy and a near-optimal low-complexity algorithm for chunk scheduling based on the matching theory are proposed. Simulation results have shown the advantage of Supcast against Softcast as well as the reference scheme in NOMA under various practical scenarios.
\end{abstract}

% Note that keywords are not normally used for peerreview papers.
\begin{IEEEkeywords}
Non-orthogonal multiple access, Video multicast, Chunk scheduling, Heterogeneous receivers.
\end{IEEEkeywords}

% For peer review papers, you can put extra information on the cover
% page as needed:
% \ifCLASSOPTIONpeerreview
% \begin{center} \bfseries EDICS Category: 3-BBND \end{center}
% \fi
%
% For peerreview papers, this IEEEtran command inserts a page break and
% creates the second title. It will be ignored for other modes.
\IEEEpeerreviewmaketitle

\textfloatsep=10pt plus 0pt minus 0pt

\section{Introduction}
Non-orthogonal multiple access (NOMA) has been considered as a promising technology to improve bandwidth efficiency in the 5G systems \cite{dai2018survey}, \cite{ding2017survey}, \cite{islam2017power}, by leveraging superposition coding (SC) and successive interference cancellation (SIC). Considering that video traffic will be dominant in growing mobile traffic \cite{index2016forecast}, it is desirable to exploit NOMA for high-volume video services. For video unicast, it has shown that NOMA can provide improved visual satisfaction compared with orthogonal multiple access (OMA) \cite{jiang2018enabling}. Meanwhile, multicast is an effective way to enhance bandwidth efficiency for high-volume video services. With SC, NOMA has the potential to enable scalable data multicast \cite{Choi2015minimum}, \cite{zhang2018tradeoffs}. In the NOMA-based multicast schemes, data is encoded into base-layer (BL) signal and enhanced-layer (EL) signal, which are transmitted simultaneously through SC. With SIC, near users with strong channel gains can decode both BL and EL signals, while far users with weak channel gains may only decode BL signals.

\begin{figure}[htb]
\centering
\includegraphics[width=5cm]{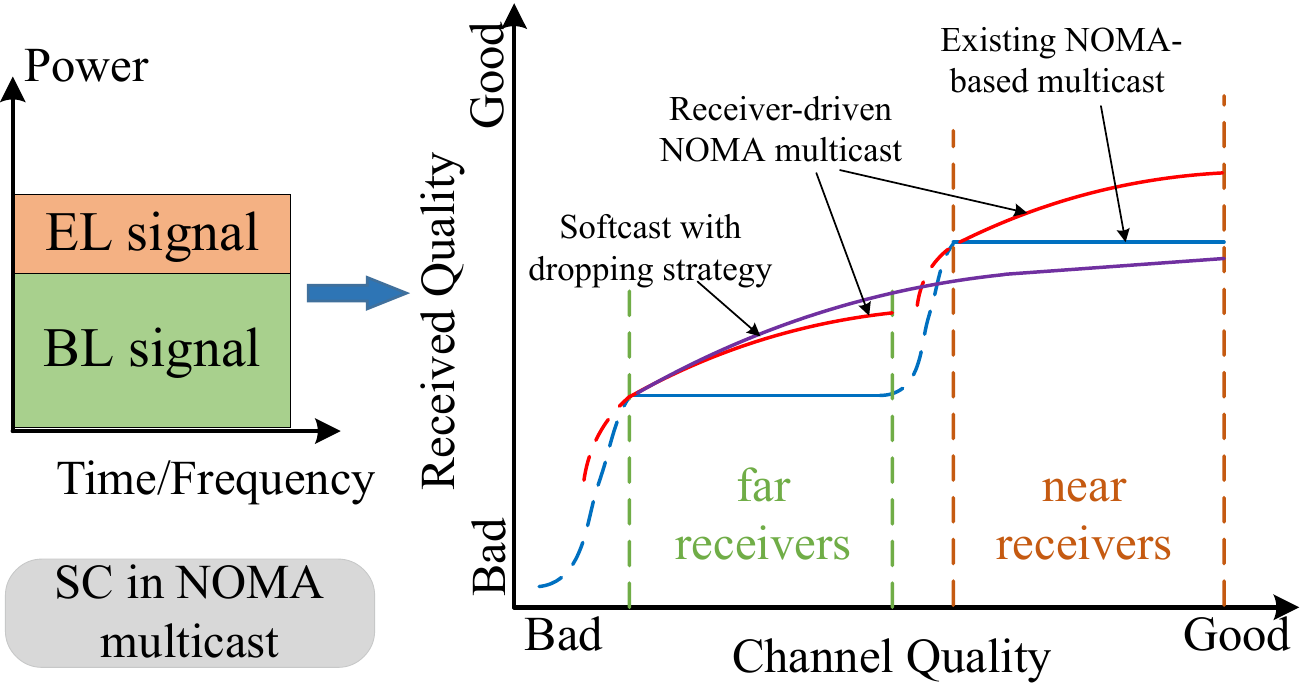}
\vspace{-0.2cm}
\caption{Multcast performance to heterogeneous receivers in NOMA systems under insufficient bandwidth.}\label{figure:cliff}
\vspace{-0.2cm}
\end{figure}
\par
However, these NOMA-based multicast scheme cannot easily meet satisfaction of all receivers with heterogeneous channel conditions. Since SIC implies that all other superposed signals have to be decoded in order before decoding the required signal, the complexity of SIC scales with the number of superposed signals and severe error propagation would occur in incorrect SIC decoding. Thus, existing NOMA-based multicast schemes generally cluster users to allow only two layer signals to be superposed, as Fig. \ref{figure:cliff} shows, the BL signal is pessimistically encoded at a bit rate decided by the decoding ability of the far user with worst channel quality \cite{DVB}. It means other far users cannot receive better video quality proportional to their channel quality. Similar drawback would also occur when encoding the EL signal for near users.

\par
In contrast, the potential to satisfy heterogeneous users has been provided through a recent proposed analog-like video transmission scheme named Softcast \cite{softcast} and its derivative schemes \cite{parcast}, \cite{cui2014robust}, \cite{liu2018cg}, \cite{tan2017analog}, \cite{he2017progressive}. Under Softcast, each receiver can receive video quality proportional to its channel quality. It achieves such robustness by skipping non-linear processes in digital transmission, e.g., quantization, entropy coding and forward error correction. Instead, coefficients, which are linearly transformed from video frames with discrete cosine transformation (DCT), are power-scaled and directly transmitted. Consequently, the perturbation caused by channel noise is linearly added to original video pixels. This enables that the encoded video signal can provided video quality adapted to channel quality of heterogeneous, not discrete quality levels pre-determined by scalable video coding (SVC) at the sender. However, the compression efficiency of Softcast is inferior to digital systems, and it has to discard certain coefficients when bandwidth is inadequate, causing performance degradation due to nonrecoverable distortion of discarding \cite{softcast}, \cite{parcast}, \cite{liang2017hybrid}, as Fig. \ref{figure:cliff} shows.

\par
To address the aforementioned dilemma, we develop a receiver-driven scheme called Supcast ({\bf sup}erposed video multi{\bf cast}) in NOMA systems, attempting to cater for all receivers with heterogeneous channel conditions as well as enhance performance in Softcast in the case of insufficient bandwidth. In Supcast, BL and EL signals are distinguished across DCT chunks, which are generated by grouping nearby DCT coefficients.  Specifically, DCT chunks are bisected into EL chunks and BL chunks, based on chunk characteristics. One EL chunk and one BL chunk compose the superposed signal in a physical packet. Since decoding performance of each chunk is proportional to channel quality, each receiver can obtain satisfied video quality. Owing to SC, more information can be conveyed compared with Softcast when bandwidth is limited.

\par
It should be emphasized that there exist two critical challenges in the design of Supcast. First, existing power allocation principles in Softcast cannot be directly adopted for Supcast due to interference caused by SC. Second, existing NOMA optimization focuses on user scheduling to decide which users are to be superposed \cite{zhai2018energy}, \cite{wu2018optimal}, \cite{fang2017joint}. However, in Supcast, user scheduling has been handled by grouping users requesting the same video contents. In Supcast, DCT chunks are basic units for signal scheduling. With SIC, superposed chunks would be regarded as noise when decoding other chunks. In this case, decoding performance would be determined by assigned superposed chunks, whose allocated power reflects interference strength. Therefore, chunk scheduling, coupled with power allocation, will be the key to ensure the desired video reception quality proportional to channel quality. The solutions to these challenges constitute the main contributions of this paper, which are summarized as follows.

\begin{itemize}
\item In Supcast, we combines the linear video processing of Softcast and the SC operation of NOMA into one framework. By doing so, Supcast can implement receiver-driven video transmission, where received quality is scalable to the heterogeneous channel conditions. Compared with Softcast developed for OMA systems, Supcast can achieve better bandwidth efficiency due to SC.
\item In Supcast, we investigate the joint power allocation and chunk scheduling problem, and formulate it as a distortion minimization problem taking into account the characteristics of video contents in these chunks. This formulated problem is a mixed integer non-linear programming (MINLP) problem, which is an intractable NP-hard problem.
\item To tackle the MINLP problem, we decompose it into two subproblems. For power allocation, a two-stage strategy is developed. For chunk scheduling, we reformulate it as a one-to-one two-sided matching game. EL chunks and BL chunks are viewed as two disjoint player sets, which are matched with each other. A near-optimal and low-complexity matching algorithm is proposed. The stability, convergence, complexity and optimality of the proposed algorithm are analyzed thoroughly.
\end{itemize}
\par
%1) In Supcast, we combines the linear video processing of Softcast and the SC operation of NOMA into one framework. By doing so, Supcast can implement robust video transmission, where the received video quality is scalable to the heterogeneous channel conditions. Moreover, compared with Softcast developed for OMA systems, Supcast can achieve better bandwidth efficiency with the SC operation.
%\par
%2) In Supcast, we investigate the joint power allocation and chunk scheduling problem, and formulate it as a distortion minimization problem taking into account the characteristics of video contents in these chunks. This formulated problem is a mixed integer non-linear programming (MINP) problem, which is an intractable NP-hard problem and can only be solved by decomposition into two tractable subproblems.
%\par
%3) To tackle the MINP problem, we decouple it into two subproblems. For power allocation, a two-stage strategy is developed with low complexity. For chunk scheduling, we reformulate it as a one-to-one two-sided matching game. EL chunks and BL chunks are viewed as two disjoint player sets, which are matched with each other. A near-optimal and low-complexity matching algorithm is proposed. The stability, convergence, complexity and optimality of the proposed matching algorithm are analyzed thoroughly in this paper.
\par
Extensive simulations have been carried out to validate the advantages of the proposed Supcast. The results demonstrate that SupCast outperforms Softcast as well as the reference scheme in NOMA under different scenarios. Considering the complexity of the practical NOMA implementation, only two layers are used in Supcast, the same as that in existing NOMA-based multicast schemes. However, Supcast can be easily extended to one BL and multiple fine-grained ELs, by modeling the chunk scheduling as a one-to-many matching or multi-step two-sided one-to-one matching.

\par
The rest of the paper is organized as follows. In Section \ref{sec:system-description}, we present an overview of the framework of Supcast. In Section \ref{sec:formulation and analysis}, we formulate the joint power allocation and chunk scheduling problem in Supcast using distortion as a metric. Both two-stage power allocation strategy and matching game formulation for chunk scheduling are presented. In Section \ref{sec:algorithm}, we present the matching algorithm with detailed analysis of its stability, convergence, complexity and optimality.  Performance evaluations are presented in Section \ref{sec:evaluation}. In Section \ref{sec:conclusion}, we conclude this paper with a summary.

%\section{Related Work}
\begin{figure*}[htb]
\centering
\includegraphics[width=16cm]{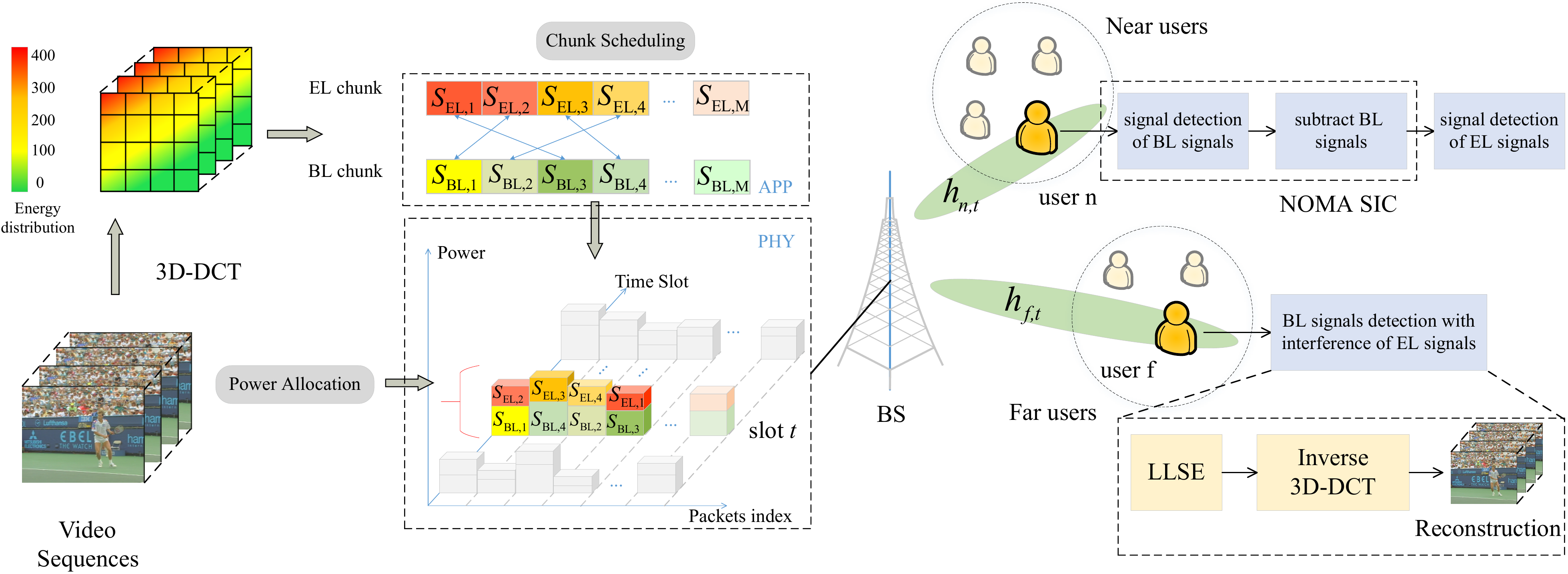}
\vspace{-0.3cm}
\caption{Architecture of Supcast over NOMA networks.}\label{figure:systemOverview}
\vspace{-0.2cm}
\end{figure*}
\section{System Description}\label{sec:system-description}
In this section, we present the framework of the proposed Supcast over the NOMA system, as shown in Fig. \ref{figure:systemOverview}. We shall also elaborate the video encoding process and reconstruction process, which integrate technologies of Softcast and NOMA.
\par
As illustrated in Fig. \ref{figure:systemOverview}, we consider the downlink video transmission from a base station (BS) to multiple users in a single cell. For practical implementation where the complexity should be carefully considered, multicast users are divided into near users and far users according to their distances to the BS, as that in existing NOMA-based multicast schemes. Since Supcast can provide fine-grained reconstruction quality proportional to channel quality, we only need to take one near user denoted by $n$ and one far user denoted by $f$ into consideration. Note that user $n$ and user $f$ have worst channel quality among near and far users, respectively, to ensure that optimized signals can be decoded by other users. Define $h_{n,t}$ and $h_{f,t}$ as the channel gain at the time slot $t$ from the BS to user $n$ and $f$, respectively. Specifically, $h_{n,t}=r_{n,t}/\sqrt{1+d_n^\eta}$ and $h_{f,t}=r_{f,t}/\sqrt{1+d_f^\eta}$, where $d_n$ and $d_f$ are the distance from the BS to user $n$ and user $f$, respectively, $\eta$ is the path-loss exponent, and $r_{n,t},r_{f,t}\sim\mathcal{CN}(0,1)$ are the Rayleigh fading coefficient at the time slot $t$. In addition, we consider that the channel does not vary within a time slot, and assume that the channel state information (CSI) is available at the BS.
\par
The video sequence is encoded into EL signals and BL signals with different transmission priorities. In the downlink NOMA system, each physical packet can simultaneously transmit EL and BL signals with SC. Based on the CSI, the BS assigns different power levels to the superposed signals to distinguish them in the power domain. According to the NOMA principle \cite{dai2018survey}, \cite{Choi2015minimum}, more power would be allocated to BL signals. In this case, user $f$ can only decode the BL signal, regarding the EL signal as interference. For user $n$, the deployed SIC algorithm enables that it can correctly decode the BL signal and subtract it before decoding the EL signal. Therefore, it can decode both BL and EL signals to achieve better reconstruction quality. Generally, the basic unit for video coding is a group of pictures (GOP), which is composed with several successive video frames. Without loss of generality, we assume that transmission of a GOP occupies one time slot. Thus, the processing of video sequences and optimization of transmission are considered with the GOP interval. In the following subsection, we will give a detailed description of how these BL and EL signals are generated and processed.
\vspace{-0.1cm}
\subsection{Softcast-based Video Encoding with SC}\label{subsection:encoding}
In Supcast, video sequences are first encoded based on the linear processing in Softcast. Specifically, a video sequence is divided into GOPs, and then 3D-DCT (three dimensional DCT) is performed over each GOP. The DCT coefficients are divided into equal-sized rectangular-shaped chunks, where coefficients in each chunk are treated as random variables drawn from the zero-mean Gaussian distribution. As Fig. \ref{figure:systemOverview} shows, DCT makes the energy distribution much more compact. Since more energy reflects more importance, we partition each chunk into BL or EL signal according to its variance, which represents the mean energy of contained coefficients. Specifically, the chunks are first sorted in the descend order of variances, and then the sorted chunks are bisected into two sets. $M$ chunks, denoted by $\mathcal{S}_B=\{S_{BL,1},\cdots,S_{BL,M}\}$, form the BL signal with larger variances $[\lambda_{BL,1}, \lambda_{BL,2}, \cdots, \lambda_{BL,M}]$. The other $M$ chunks, denoted by $\mathcal{S}_E=\{S_{EL,1},\cdots,S_{EL,M}\}$, form the EL signal with smaller variances $[\lambda_{EL,1}, \lambda_{EL,2}, \cdots, \lambda_{EL,M}]$.
\par
Without loss of generality, one BL chunk together with one EL chunk is assumed to fit into one physical packet via SC in NOMA. Besides, we assume that the bandwidth of a time slot is enough for transmission of a GOP. If the bandwidth is insufficient, Supcast would take the dropping strategy similar to that in Softcast. Specifically, the least important chunks would be discarded to achieve bandwidth compaction. The remaining chunks are bisected into $M'$ BL chunks and $M'$ EL chunks, where $M'<M$. Before transmission, the BS selects one BL chunk and one EL chunk, e.g., $S_{BL,i}$ and $S_{EL,j}$, for superposition into a NOMA packet, $i,j=\{1\cdots M\}$. This procedure is called chunk scheduling, implemented by the BS according to the available CSI.
%coefficients in the $i^{th}$ BL chunk $S_{BL,i}$ are superposed with coefficients in the $j^{th}$ EL chunk $S_{EL,j}$ over the $k^{th}$ subchannel $C_k$, $i,j,k=1\cdots M$.
\par
Note that each modulated symbol in physical layer contains the I (in-phase) component and Q (quadrature) component. We use $S_{BL,i}^c$ and $S_{EL,j}^c$ to denote the complex source of $S_{BL,i}$ and $S_{EL,j}$, respectively:
%We use $S_{i,j,k}^c$ to denote the complex superposed source over $C_k$:
\begin{equation}
S_{BL,i}^c=\frac{S_{BL,i}^o+i_z\cdot S_{BL,i}^e}{\sqrt{2}}, ~S_{EL,j}^c=\frac{S_{EL,j}^o+i_z\cdot S_{EL,j}^e}{\sqrt{2}},
\end{equation}
where $S_{BL,i}^o$, $S_{EL,j}^o$ are the odd-index part of $S_{BL,i}$, $S_{EL,j}$, and $S_{BL,i}^e$, $S_{EL,j}^e$ are the even-index part of $S_{BL,i}$, $S_{EL,j}$, respectively. $i_z$ is the imaginary unit.
\par
Before transmission, coefficients in each BL chunk or EL chunk are scaled by the same scaling factor, denoted by $g_{BL,i}$ or $g_{EL,j}$, $\forall i,j=\{1\cdots M\}$. In Supcast, according to the principle of NOMA \cite{dai2018survey}, \cite{Choi2015minimum}, BL chunks should be allocated more power than EL chunks, which guarantees decoding performance of user $f$ with interference of the EL signal. Fortunately, such power allocation principle is coincident with the principle of optimal power allocation in Softcast, which suggests chunks with larger variances should be allocated more power. However, the existed interference and coupled relationship with chunk scheduling, make it unrealistic to directly adopt power allocation results of Softcast into Supcast.
\subsection{Video Reconstruction with SIC and LLSE}\label{subsec:video-reconstruction}
Throughout the paper, the processing of video sequences and optimization are carried out within a GOP duration, corresponding to a time slot as assumed above. Therefore, we omit the time slot index $t$ for brevity in the following paper. Supposing that $S_{BL,i}$ and $S_{EL,j}$ are paired to be conveyed in a packet, the received signals at user $n$ and user $f$ are:
\begin{equation}
\begin{aligned}
Y_{n,i,j} &= h_{n}(g_{BL,i}S_{BL,i}^c+g_{EL,j}S_{EL,j}^c) + W_{n},\\
Y_{f,i,j} &= h_{f}(g_{BL,i}S_{BL,i}^c+g_{EL,j}S_{EL,j}^c) + W_{f},
\end{aligned}
\end{equation}
where $W_{n},W_{f}\sim\mathcal{CN}(0,\sigma_w^2)$ are additive white Gausssian noise (AWGN) for user $n$ and $f$, and $\sigma_w^2$ is the noise variance.
\par
At the receiver side, user $n$ adopts SIC to decode the superposed signals. With perfect SIC, user $n$ can cancel the interference of BL signals \cite{islam2017power}, \cite{Choi2015minimum}, \cite{wu2018optimal}, which are allocated with more power. Specifically, it first correctly decodes the BL signal with SIC, then it subtracts this signal from the received signal and decodes the EL signal without interference. In Supcast, the linear least square estimator (LLSE) is used for the signal decoding. Then inverse 3D-DCT is implemented on the decoded EL chunks to obtain a reconstructed video sequence with superior quality. In this case with perfect SIC, distortion of user $n$ comes from decoding errors of the LLSE of EL chunks. As derived in \cite{softcast}, such distortion in terms of mean square error (MSE) is
\begin{equation}\label{eq:hl-distortion}
d_{n,j} = \frac{\lambda_{EL,j}\sigma_w^2}{|h_{n}|^2g_{EL,j}^2\lambda_{EL,j}+\sigma_w^2}.
\end{equation}
For use $f$, it decodes BL chunks by regarding EL chunks as interference according to NOMA. Thus, based on Eq. (\ref{eq:hl-distortion}), its MSE of decoding is
\begin{equation}\label{eq:ll-distortion}
d_{f,i,j} = \frac{\lambda_{BL,i}(|h_{f}|^2g_{EL,j}^2\lambda_{EL,j}+\sigma_w^2)}{|h_{f}|^2g_{BL,i}^2\lambda_{BL,i}+|h_{f}|^2g_{EL,j}^2\lambda_{EL,j}+\sigma_w^2}.
\end{equation}
Thus, overall distortion of transmitting $S_{BL,i}$ and $S_{EL,j}$ is
\begin{equation}\label{eq:overall-distortion}
D_{i,j} = d_{n,j}+d_{f,i,j}+\sum_{j=1}^M\lambda_{EL,j},
\end{equation}
where $\sum_{j=1}^M\lambda_{EL,j}$ is the MSE distortion caused by user $f$ completely failing to decode EL chunks. Since this part of distortion is constant, it need not be considered in the following optimization.

\vspace{0.5cm}
\section{Problem Formulation and Analysis}\label{sec:formulation and analysis}
In this section, we formulate the power allocation and chunk scheduling problem as a distortion minimization problem. To develop low-complexity and near-optimal solution for this NP-hard problem, we decompose it into two subproblems. For power allocation, we analyze it in two stages with consideration of chunk diversity and interference. For chunk scheduling, we reformulate it by utilizing the matching theory.
\subsection{Problem Statement and Formulation}
The objective of the joint power allocation and chunk scheduling problem is to minimize overall transmission distortion, which is measured by the MSE. Define the binary variable $\mu_{i,j}$ as the indicator for chunk scheduling:
\begin{equation}\label{eq:chunk-indicator}
\mu_{i,j}=
\begin{cases}
    1,&S_{BL,i} {\rm~\, is ~\,paired ~\,with} ~\,S_{EL,j},\\
    0, &{\rm otherwise}.
\end{cases}
\end{equation}
When $\mu_{i,j}=1$, $S_{BL,i}$ and $S_{EL,j}$ are simultaneously transmitted via SC, which brings transmission distortion as expressed in Eq. (\ref{eq:overall-distortion}). Hence, the optimization problem can be formulated as:
\begin{subequations}\label{equation:original-optimization}
\setlength{\abovedisplayskip}{-1pt}
\setlength{\belowdisplayskip}{6pt}
\begin{align}
\!\!\!\!\min_{\bm{\mu,g_{BL},g_{EL}}} \! &\sum_{i=1}^M\sum_{j=1}^M \mu_{i,j}D_{i,j}\label{equation:original-optimization-a}\\
\text{s.t.}~~~\ &\sum_{i=1}^M\sum_{j=1}^M\! \mu_{i,j}(g_{BL,i}^2\lambda_{BL,i}\!+\!g_{EL,j}^2\lambda_{EL,j}) \!\leq\! P^{t},\label{equation:original-optimization-b} \\
&g_{EL,j}^2\lambda_{EL,j} \leq g_{BL,i}^2\lambda_{BL,i}, ~{\rm if}~ \mu_{i,j}=1,\label{equation:original-optimization-c}\\
&\sum_{i=1}^M\mu_{i,j}\!=\!1,\sum_{j=1}^M\mu_{i,j}\!=\!1,~\forall i,j=\{1\cdots M\},\label{equation:original-optimization-d}\\
&\mu_{i,j}\in\{0,1\},~~\forall i,j=\{1\cdots M\}.\label{equation:original-optimization-e}
\end{align}
\end{subequations}
Note that the optimization variable $\bm \mu$ is the chunk scheduling matrix with entries $\mu_{i,j}$. The optimization variables $\bm {g_{BL}}$ and $\bm {g_{EL}}$ are the $M$-dimensional power scaling vectors for BL and EL chunks, respectively. Constraint (\ref{equation:original-optimization-b}) ensures that the total transmitted power for a GOP does not exceed the budget $P^t$. Constraint (\ref{equation:original-optimization-c}) guarantees the SIC decoding according to NOMA power allocation principle. Constraint (\ref{equation:original-optimization-d}) and (\ref{equation:original-optimization-e}) indicate that each BL chunk can only be superposed with one EL chunk and vice verse.
\par
Since the optimization problem in (\ref{equation:original-optimization}) involves both continuous variables $\bm {g_{BL}}$, $\bm {g_{EL}}$ and binary variable $\bm \mu$, it is an NP-hard MINLP problem \cite{di2016joint}, \cite{boyd2004convex}. It is unrealistic to find the global optimal solution. To tackle this coupled problem with low-complexity yet near-optimal solution, we divide it into two subproblems. One is the power allocation problem, which can be handled in two stages. The other is the chunk scheduling problem which can be reformulated as a one-to-one two-sided matching problem.

\subsection{Two-stage Power Allocation}
Given chunk scheduling is determined, that is the two-tuple set for chunk superposition is given as $\mathcal{T}_{c}$. Then problem (\ref{equation:original-optimization}) can be rewritten as
\begin{subequations}\label{equation:power-optimization}
\begin{align}
\!\min_{\bm{g_{BL},g_{EL}}}&\sum_{\{i,j\}\in\mathcal{T}_{c}}D_{i,j}\label{equation:power-optimization-a}\\
\text{s.t.}~~\ &\sum_{\{i,j\}\in\mathcal{T}_{c}}(g_{BL,i}^2\lambda_{BL,i}\!+\!g_{EL,j}^2\lambda_{EL,j}) \!\leq\! P^{t},\label{equation:power-optimization-b} \\
&~~g_{EL,j}^2\lambda_{EL,j} \leq g_{BL,i}^2\lambda_{BL,i}, ~\forall \{i,j\}\in\mathcal{T}_{c}.\label{equation:power-optimization-c}
\end{align}
\end{subequations}
\par
However, due to the existence of interference in the objective function in (\ref{equation:power-optimization-a}), it is not trivial to convert such nonlinear optimization problem into a convex optimization problem.

\par
Actually, the problem in (\ref{equation:power-optimization}) belongs to the class of the sum of generalized polynomial fractional functions (SGPFF) problem. The study in \cite{shen2007global} has shown that global optimal solution can be obtained with a branch and bound algorithm. This algorithm works by solving an equivalent problem, which is further systematically converted into a series of linear programming (LP) problems. However, the number of converted LP problems is related to the dimension of optimization variables. Thus, it is not practical to implement this algorithm to globally optimize the problem in (\ref{equation:power-optimization}), since the dimension of variables is $2M$, which is always large for DCT chunk division. To handle this problem efficiently, we propose a two-stage power allocation strategy with high tractability.
\par
At the first stage, we pre-allocate power across chunks according to their importance to reconstruction, without consideration of the channel gain diversity and interference. In this case, the power pre-allocation problem can be solved by the method of Largrange multiplier. Power allocated to BL and EL chunks at this stage can be derived as
\begin{equation}\label{eq:power-preallocation}
\begin{cases}
P_{BL,i}=\frac{\sqrt{\lambda_{BL,i}}}{\sum_{k\in\mathcal{M}}(\sqrt{\lambda_{BL,k}}+\sqrt{\lambda_{EL,k}})}P^t,\\
P_{EL,j}=\frac{\sqrt{\lambda_{EL,j}}}{\sum_{k\in\mathcal{M}}(\sqrt{\lambda_{BL,k}}+\sqrt{\lambda_{EL,k}})}P^t.
\end{cases}
\end{equation}
The derivation process of such power distortion optimization can be traced from \cite{softcast}. As Eq. (\ref{eq:power-preallocation}) shows, power allocated to the BL (EL) chunk is proportional to its energy $\lambda_{BL,i}$ ($\lambda_{EL,j}$), which reflects the chunk importance.
\par
At the second stage, we re-allocate power within each superposed EL-BL chunk pair, given the power pre-allocation result obtained in the first stage. If $\{i,j\}\in\mathcal{T}_{c}$, the optimization problem for paired $S_{BL,i}$ and $S_{EL,j}$ is formulated as follows.
\begin{subequations}\label{eq:power-reallocation}
\setlength{\abovedisplayskip}{-1pt}
\setlength{\belowdisplayskip}{6pt}
\begin{align}
\!\min_{{g_{BL,i},g_{EL,j}}}&D_{i,j}\label{eq:power-reallocation-a}\\
\text{s.t.}~~\ &g_{BL,i}^2\lambda_{BL,i}\!+\!g_{EL,j}^2\lambda_{EL,j} \!\leq\! P^t_{i,j},\label{eq:power-reallocation-b} \\
&g_{EL,j}^2\lambda_{EL,j} \leq g_{BL,i}^2\lambda_{BL,i},\label{eq:power-reallocation-c}
\end{align}
\end{subequations}
where $P^t_{i,j}=P_{BL,i}+P_{EL,j}$.  Since $g_{BL,i}$ and $g_{EL,i}$ are non-negative, it can be proven that problem (\ref{eq:power-reallocation}) has unique solution. Substituting Eq. (\ref{eq:hl-distortion})-(\ref{eq:overall-distortion}) and Eq. (\ref{eq:power-preallocation}) into the objective (\ref{eq:power-reallocation-a}), we can derive the optimal solution as
\begin{equation}\label{eq:power-reallocation-solution}
\begin{cases}
g_{EL,j}^*\!=\!min([g_{EL,j}^\upsilon]^+,~~\frac{1}{2}\frac{\sqrt{\lambda_{BL,i}}+\sqrt{\lambda_{EL,j}}}{\sum_{k\in\mathcal{M}}(\sqrt{\lambda_{BL,k}}+\sqrt{\lambda_{EL,k}})}P^t),\\
g_{BL,i}^*\!=\!(\!\frac{(\sqrt{\lambda_{BL,i}}+\sqrt{\lambda_{EL,j}})P^t}{\sum\limits_{k\in\mathcal{M}}(\sqrt{\lambda_{BL,k}}+\sqrt{\lambda_{EL,k}})}\!-\!(g_{EL,j}^*)^2\lambda_{EL,j})^{\frac{1}{2}}\frac{1}{\sqrt{\lambda_{BL,i}}},
\end{cases}
\end{equation}
where $g_{EL,j}^\upsilon=(\frac{\sigma_w}{h_nh_f}\sqrt{\frac{h_f^2P^t_{i,j}+\sigma_w^2}{\lambda_{BL,i}\lambda_{EL,j}}}-\frac{\sigma_w^2}{h_n^2\lambda_{EL,j}})^\frac{1}{2}$, and $[x]^+$ means $max(x,0)$. The solution is generated by finding the stationary point. Due to space limits, the detailed derivation is omitted.

\subsection{Two-sided Matching Formulation for Chunk Scheduling}\label{subsec:matching-formulation}
Now we utilize the matching theory to formulate the chunk scheduling problem in (\ref{equation:original-optimization}) under given power allocation. Some definitions and notations are given in the following content.

\subsubsection{Definition}
To describe the mutual relationship between BL chunks and EL chunks, we consider chunk scheduling as a one-to-one two-sided matching process between the set of $M$ BL chunks and the set of $M$ EL chunks. The BS considers these two disjoint sets of chunks as selfish and rational players. Since perfect CSI is available at the BS, these players have complete information of each other when matching. We say $S_{BL,i}$ and $S_{EL,j}$ are matched together and form a {\em matching pair}, if $S_{BL,i}$ and $S_{EL,j}$ are superposed for transmission through a NOMA packet. Based on these, we can formulate the chunk scheduling problem as a typical matching problem, presented as
\begin{definition1}\label{def:matching}
\emph{(One-to-One Two-sided Matching):} Consider BL chunks and EL chunks as two disjoint sets, $\mathcal{S}_B=\{S_{BL,1},\cdots,S_{BL,M}\}$ and $\mathcal{S}_E=\{S_{EL,1},\cdots,S_{EL,M}\}$, respectively. A one-to-one, two-sided {\em matching} $\Phi$ is a mapping from the set of BL chunks $\mathcal{S}_B$ into the EL chunks set $\mathcal{S}_E$, such that for every $S_{BL,i}\in\mathcal{S}_B$ and $S_{EL,j}\in\mathcal{S}_E$ satisfying
\begin{itemize}
\setlength{\itemindent}{0.5em}
  \item [1)]$\Phi(S_{BL,i})\in\mathcal{S}_E$,
  \item [2)]$\Phi(S_{EL,j})\in\mathcal{S}_B$,
  \item [3)]$|\Phi(S_{BL,i})|=1$, $|\Phi(S_{EL,j})|$=1,
  \item [4)]$S_{EL,j}=\Phi(S_{BL,i})\Leftrightarrow S_{BL,i}=\Phi(S_{EL,j})$,
\end{itemize}
\end{definition1}
where $\Phi(S_{BL,i})$ represents $S_{BL,i}$'s partner in $\Phi$ and $\Phi(S_{EL,j})$ represents $S_{EL,j}$'s partner in $\Phi$. Conditions 1), 2) and 3) state that each BL chunk is matched with one EL chunk, and vise verse. Such one-to-one setting is due to the complexity of SIC decoding in NOMA. Condition 4) implies $S_{BL,i}$ and $S_{EL,j}$ are matched with each other.

\subsubsection{Preference Lists}\label{subsubsec:preference}
It should be emphasized that the result of such matching game is greatly influenced by the competition and decision process among players \cite{baron2011peer}. To better characterize these dynamic interactions, each player has own {\em preferences} over the players in the other set. It has been studied that different settings of preferences would have various properties, which may lead to different designs of matching algorithms \cite{roth1992two}, \cite{bayat2014distributed}. In this paper, we take the sum-distortion in Eq. (\ref{eq:overall-distortion}) to directly decide the order of preferences.
\par
The BS can set the preference list for each player, which is ranked in a descending order by the value calculated in Eq. (\ref{eq:overall-distortion}) paired with the player in other set. For example, for any $S_{BL,i}\in\mathcal{S}_B$ and $S_{EL,j},S_{EL,k}\in\mathcal{S}_E$:
\begin{equation}\label{eq:preference}
S_{EL,j} \succ_{S_{BL,i}} S_{EL,k} \Leftrightarrow D_{i,j} < D_{i,k}
\end{equation}
implies that $S_{BL,i}$ prefers $S_{EL,j}$ to $S_{EL,k}$ since the former can provide lower sum-distortion, i.e., higher utility.

\par
With the above matching model and preference lists formulation, we propose an algorithm to solve the formulated matching problem in the next section.
%\vspace{0.4cm}
\section{Matching Algorithm for Chunk Scheduling}\label{sec:algorithm}
In this section, we propose a near-optimal algorithm for chunk scheduling by utilizing the matching theory to reduce computational complexity. Furthermore, thorough analysis of the proposed algorithm is provided.

\subsection{Design and Description of Algorithm}
Inspired by the matching theory \cite{roth1992two}, \cite{bayat2014distributed}, \cite{gale1962college}, we propose the BL-EL chunk matching algorithm (BECMA) for chunk scheduling. Considering the competition behaviour as mentioned in Sec. \ref{subsec:matching-formulation}, the basic idea of BECMA is allowing the BL chunk to make a {\em proposal} to an EL chunk selected from its preference list, and the proposed EL chunk has the right to accept or reject the proposal.

\par
Obviously, the conflict would occur when an EL chunk is so ``popular'' that it receives more than one proposal. Since this is a one-to-one matching, an intuitive question would arise that it should accept which proposal and reject others. To answer it, we first introduce the concept of {\em blocking pair} as follows.
\begin{definition1}\label{def:matching}
\emph{(Blocking Pair):} A BL-EL chunk pair $(S_{BL,i},S_{EL,j})$ is a blocking pair in $\Phi$ if it satisfies $S_{EL,j} \succ_{S_{BL,i}} \Phi(S_{BL,i})$ and $S_{BL,i} \succ_{S_{EL,j}} \Phi(S_{EL,j})$, where $\Phi(S_{BL,i})$ represents $S_{BL,i}$'s partner in $\Phi$ and $\Phi(S_{EL,j})$ represents $S_{EL,j}$'s partner in $\Phi$.
\end{definition1}

\par
According to the above definition and Eq. (\ref{eq:preference}), a blocking pair implies higher utility than the original matching pair. Thus, if a matched EL chunk receives another proposal, it will accept the proposing BL chunk only when they can form a blocking pair.

\par
Now we can elaborate the matching process in BECMA, as presented in Algorithm 1. Each BL chunk makes proposals to the EL chunk in order of its preference list. The BL chunk would pause the process if an EL chunk temporarily accepts its proposal, but continue proposing if it is rejected. Meanwhile, for the proposed EL chunk, it will reject the BL chunk if they cannot form a blocking pair, otherwise it will accept the proposal for consideration. This process ends until no BL chunk needs to propose.
\par

\begin{algorithm}
\DontPrintSemicolon
\SetAlgoHangIndent{0pt}
  \caption{BL-EL Chunk Matching Algorithm (BECMA)}
   \label{alg:Optimal-polyblock}
   \KwIn{Set of BL chunks $\mathcal{S}_B$ and set of EL chunks $\mathcal{S}_E$.}
   \KwOut{Stable matching $\Phi$}
   Set up BL chunks' preference lists. \\
   Set up EL chunks' preference lists. \\
   Set up a set of unmatched BL chunks $\mathcal{S}_B^U$ to record BL chunks who have not been paired with any EL chunk.\\
   \While{$\mathcal{S}_B^U$ {\rm is not empty}}
   {$S_{BL,i}$ proposes to its currently most preferred available EL chunk $S_{EL,j}$.\\
   \eIf{$S_{EL,j}$ {\rm already has a partner} $S_{BL,k}$ {\rm and} $(S_{BL,i},S_{EL,j})$ {\rm is not a} blocking pair}
   {
   $S_{EL,j}$ rejects $S_{BL,i}$ and continues holding $S_{BL,k}$.\\ $S_{BL,i}$ removes $S_{EL,j}$ from its preference lists.}
   {
   $S_{EL,j}$ accepts $S_{BL,i}$ and rejects $S_{BL,k}$.\\
   $S_{BL,k}$ removes $S_{EL,j}$ from its preference lists.\\
   $S_{BL,i}$ is removed from $\mathcal{S}_B^U$ and $S_{BL,k}$ is added into $\mathcal{S}_B^U$.
   }
   }
   Output the matching $\Phi$.
\end{algorithm}

\subsection{Analysis of Algorithm}\label{subsec:algorithm-analysis}

\subsubsection{Complexity}
In BECMA, the computational complexity comes from two phases. One is the sorting phase of establishing preference lists, which requires complexity of $\mathcal{O}(2M^2)$. The other is the matching phase, where each BL chunk will propose at most $M$ times. In the worst case, the complexity of matching is $\mathcal{O}(M^2)$. Thus, the complexity of BECMA is $\mathcal{O}(3M^2)$.
\par
For comparison, we analyze two other matching schemes as shown follows.
\begin{itemize}
  \item Optimal Exhaustive Matching: It generates the optimal result by searching all possible $M!$ combinations, where $!$ denotes factorial. Thus, its complexity is $\mathcal{O}(M!)$.
  \item Random Matching: BL chunks and EL chunks are randomly matched with each other. Therefore, the complexity is $\mathcal{O}(M)$.
\end{itemize}
With above analysis, we can see that the complexity of BECMA is much less than the optimal exhaustive scheme. However, as shown in Sec. \ref{subsec:performance-comparison}, the performance degradation of BECMA compared with the optimal scheme is negligible.

\subsubsection{Stability and Convergence}
We first give the definition of stability for matching as below.
\begin{definition1}\label{def:stable}
\emph{(Stable Matching):} A matching $\Phi$ is stable, if there exists no blocking pair $(S_{BL,i},S_{EL,j})$ in $\Phi$.
\end{definition1}
With Definition 2 and 3, we now can prove the stability and convergence of BECMA.

\begin{theorem}
The proposed BECMA converges to a stable matching $\Phi^*$ with limited iterations.
\end{theorem}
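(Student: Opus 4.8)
The plan is to view BECMA as a deferred-acceptance procedure with the BL chunks acting as the proposing side, and to establish convergence and stability in turn, both hinging on a single monotonicity invariant for the EL chunks. First I would prove finite termination by a counting argument. Each pass through the \textbf{while} loop corresponds to exactly one proposal. Whenever a proposal is rejected --- either directly, or because the proposed EL chunk later trades up and bumps a held BL chunk --- the rejected BL chunk deletes that EL chunk from its preference list, so no BL chunk ever proposes to the same EL chunk twice. Since each of the $M$ BL chunks has a list of length at most $M$, the total number of proposals is bounded by $M^2$, and the loop executes at most $M^2$ times. This yields the ``limited iterations'' claim directly.

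Second, I would argue that the procedure actually terminates with $\mathcal{S}_B^U=\emptyset$, i.e.\ with a complete one-to-one matching, rather than stalling with some BL chunk whose list has been exhausted. The key lemma is monotonicity on the EL side: once an EL chunk $S_{EL,j}$ accepts a proposal it remains matched for the rest of the run, and its partner only improves with respect to $\succ_{S_{EL,j}}$, because $S_{EL,j}$ relinquishes $S_{BL,k}$ only when a strictly more preferred $S_{BL,i}$ arrives (the acceptance test is precisely the blocking-pair condition). Granting this, suppose some $S_{BL,i}$ ended with an empty list. Then every EL chunk rejected it, so at the moment of each rejection --- and hence, by monotonicity, at termination --- every EL chunk holds a partner it prefers to $S_{BL,i}$. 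That assigns $M$ distinct BL chunks, none equal to $S_{BL,i}$, to the $M$ EL chunks, contradicting $|\mathcal{S}_B|=M$. Hence no list empties, a valid proposal is always available while $\mathcal{S}_B^U\neq\emptyset$, and the run ends with a complete matching $\Phi^*$.

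Third, I would establish stability by contradiction using the same invariant. Suppose $(S_{BL,i},S_{EL,j})$ were a blocking pair in $\Phi^*$, so that $S_{EL,j}\succ_{S_{BL,i}}\Phi^*(S_{BL,i})$ and $S_{BL,i}\succ_{S_{EL,j}}\Phi^*(S_{EL,j})$. Since $S_{BL,i}$ proposes in strictly decreasing order of preference and settled for the less preferred $\Phi^*(S_{BL,i})$, it must earlier have proposed to $S_{EL,j}$ and been rejected. At that rejection $S_{EL,j}$ already held a partner it preferred to $S_{BL,i}$, and by monotonicity its final partner $\Phi^*(S_{EL,j})$ is at least as preferred, so $\Phi^*(S_{EL,j})\succ_{S_{EL,j}}S_{BL,i}$ --- contradicting the blocking condition. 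Therefore $\Phi^*$ contains no blocking pair and is stable in the sense of the stability definition.

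I expect the main obstacle to be a careful formulation and proof of the monotonicity invariant, since both completeness and stability rest on it, and the acceptance rule here is phrased through the blocking-pair test rather than a direct ``prefers current partner'' comparison. In particular I would need to check the boundary behaviour of the test when the proposer $S_{BL,i}$ is itself unmatched: proposing to the top of its remaining list makes the proposer-side condition $S_{EL,j}\succ_{S_{BL,i}}\Phi(S_{BL,i})$ hold automatically (an unmatched chunk being least preferred), so the acceptance test collapses to whether $S_{EL,j}$ prefers $S_{BL,i}$ to its incumbent. This is exactly what makes BECMA behave as classical deferred acceptance and what the monotonicity argument requires.
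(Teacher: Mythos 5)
Your proof is correct and follows the same deferred-acceptance skeleton as the paper's, but it is substantially more complete. The paper's own proof consists of two brief observations: each BL chunk can propose to each EL chunk at most once, so the number of proposals is finite (the paper even understates the bound as ``less than $M$'' iterations where your $M^2$ count is the right one); and stability is asserted ``with the detail in Algorithm 1'' without an explicit argument. You supply exactly the two ingredients the paper leaves implicit: the monotonicity invariant on the EL side (an EL chunk, once matched, only trades up), which you then use both to rule out a BL chunk exhausting its list --- so the loop cannot stall and the output is a complete matching --- and to derive the contradiction in the stability argument (a rejected proposer can never later form a blocking pair with the rejecting EL chunk). Your observation that the blocking-pair acceptance test collapses to the classical ``prefers the incumbent'' comparison, because the proposer is always unmatched when it proposes, is the right way to connect BECMA to standard Gale--Shapley and is not spelled out in the paper. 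In short, same route, but your version closes gaps the paper's two-paragraph sketch leaves open.
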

\begin{proof}
First we prove the convergence. In each iteration in Algorithm 1 (Lines 5-13), each $S_{BL,i}\in\mathcal{S}_B$ makes a proposal to its currently most preferred EL chunk, which has not rejected it in previous iterations. Since the total number of EL chunks is $M$, each BL chunk cannot make more than $M$ proposals. In other word, the total number of iterations is less than $M$. Besides, the iteration would end until every BL chunk has been matched. Therefore, BECMA can converge to a final matching $\Phi^*$ after a limited number of iterations.
\par
Next we prove that $\Phi^*$ is stable. With the detail in Algorithm 1, in the final matching $\Phi^*$, there is no any $S_{BL,i}\in\mathcal{S}_B$ can find a $S_{EL,j}\in\mathcal{S}_E$ to form a blocking pair. Thus, according to Definition 3, $\Phi^*$ is a stable matching.
\end{proof}
\subsubsection{Optimality}Here, we give a Theorem to state whether an optimal matching can be achieved by Algorithm 1.
\begin{theorem}
The stable matching $\Phi^*$ converged in BECMA is Pareto optimal for BL chunks.
\end{theorem}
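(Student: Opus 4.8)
\emph{Proof proposal.} The plan is to argue by contradiction directly from the definition of Pareto optimality for the proposing side. First I would make that definition explicit: $\Phi^*$ fails to be Pareto optimal for BL chunks precisely when there exists some (not necessarily stable) matching $\Phi'$ under which no BL chunk is worse off, $\Phi'(S_{BL,i}) \succeq_{S_{BL,i}} \Phi^*(S_{BL,i})$ for every $i$, while at least one BL chunk is strictly better off. Assuming such a $\Phi'$ exists, the objective is to expose a contradiction with the order in which BECMA issues and resolves proposals. Note at the outset that invoking the stability of $\Phi^*$ from Theorem 1 is not by itself enough, since the dominating matching $\Phi'$ need not be stable; it is the proposing structure of Algorithm 1 that does the work.

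The engine of the argument is one structural fact about the proposing algorithm, which I would establish first: if during BECMA an EL chunk $S_{EL,j}$ ever rejects a BL chunk $S_{BL,i}$, then in the final matching $S_{EL,j}$ is paired with a BL chunk it strictly prefers, i.e. $\Phi^*(S_{EL,j}) \succ_{S_{EL,j}} S_{BL,i}$. This holds because, by the acceptance rule in Algorithm 1, an EL chunk swaps its held partner only when the new proposer forms a blocking pair with the current one, so the quality of its held partner is monotone nondecreasing over the run and can only improve after a rejection.

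With this in hand I would exploit the proposing order. Since each BL chunk proposes down its preference list (ranked by $D_{i,j}$ through Eq.~(\ref{eq:preference})), any chunk $S_{BL,i}$ that is strictly better off in $\Phi'$, so that $\Phi'(S_{BL,i}) \succ_{S_{BL,i}} \Phi^*(S_{BL,i})$, must have proposed to $\Phi'(S_{BL,i})$ earlier in the run and been rejected; the structural fact then yields $\Phi^*(\Phi'(S_{BL,i})) \succ_{\Phi'(S_{BL,i})} S_{BL,i}$. Letting $B^+$ denote the set of strictly-better-off BL chunks and tracking, for each of them, the EL partner assigned in $\Phi'$ together with the BL chunk it displaces in $\Phi^*$, I would show that $\Phi^*$ carries these EL chunks back into $B^+$. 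Hence $B^+$ and its $\Phi'$-images form a finite closed system on which every BL chunk strictly prefers its $\Phi'$-partner while every EL chunk strictly prefers its $\Phi^*$-partner.

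The final and hardest step is to close the contradiction inside this closed system by a timing argument on the run of BECMA. I would consider the last rejection issued by any EL chunk of the system to its own $\Phi'$-partner; the BL chunk holding that EL chunk at termination lies in $B^+$ and must itself have been rejected earlier by the EL chunk it is matched to in $\Phi'$, and reconstructing when that chunk could have proposed to, and come to be held by, the EL chunk that issued the supposed last rejection forces a rejection strictly later than it, the contradiction. I expect this ordering bookkeeping to be the main obstacle: the structural lemma and the closed-system construction are routine, but pinning down the relative timing of the two rejections, and promoting ``at least as well off'' to ``strictly better off'' throughout the system when preferences over $D_{i,j}$ admit ties, is where the argument must be made airtight.
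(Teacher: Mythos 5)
Your structural lemma (an EL chunk's held partner is monotone nondecreasing over the run, so a rejected proposer ends up weakly below that EL chunk's final partner) and your closed-system construction are both sound under strict preferences. The genuine gap is the final timing step, and it is not merely ``the main obstacle'' --- it cannot be closed, because the statement you are actually trying to prove (strong Pareto optimality over \emph{all} matchings: no $\Phi'$ leaving every proposer weakly better off and some strictly better off) is false for the proposer-optimal outcome of deferred acceptance under general strict preferences. A standard three-by-three counterexample (see \cite{roth1992two}) exhibits an unstable matching $\Phi'$ that weakly dominates $\Phi^*$ for all proposers and strictly improves two of them; tracing the run, the resulting closed system has exactly the properties you derive --- every proposer in $B^+$ strictly prefers its $\Phi'$-partner, every receiver in the system strictly prefers its $\Phi^*$-partner --- and yet no contradiction arises: the ``last rejection in the system'' is issued at precisely the moment its issuer accepts the displacing proposer, and that proposer was rejected by its own $\Phi'$-partner strictly \emph{earlier}, not later. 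What \cite{gale1962college} actually gives, and what the paper's one-line proof intends to invoke, is the weaker pair of facts that $\Phi^*$ is optimal for BL chunks among \emph{stable} matchings (proved by induction on the claim that no proposer is ever rejected by a partner achievable in some stable matching) and weakly Pareto optimal over all matchings (no $\Phi'$ makes \emph{every} BL chunk strictly better off, proved by observing that the last proposal of the whole run goes to an EL chunk that has never rejected anyone). Neither of those arguments survives restriction to your subsystem $B^+$, which is exactly where your outline needs them.

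To repair the argument you must either weaken the claim to the Gale--Shapley sense of optimality, or genuinely exploit the special structure of the preferences here: both sides rank pairs by the common values $D_{i,j}$ via Eq.~(\ref{eq:preference}), and the counterexample's preference profile is not realizable from a single such matrix. Your proposal gestures at this alignment only in connection with ties, so as written the proof does not go through.
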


\par
The proof can be completed similar to that in \cite{gale1962college}. Since BL chunks are allowed to make proposals in Algorithm 1, we can refer it as a BL-driven matching algorithm. If we allow EL chunks to propose, we can similarly form an EL-driven matching algorithm, which generally produces different performance. According to \cite{gale1962college}, proposing players would be better off than proposed players, and achieve Pareto optimality. However, in the proposed BECMA, we can conclude the following remark.
\begin{remark}
The BL-chunk optimal BL-driven matching algorithm can have quite similar distortion performance, compared with the EL-chunk optimal EL-driven matching algorithm.
\end{remark}

\par
This property as described above is due to the formulated preference list in Sec. \ref{subsubsec:preference} is {\em mutual}, since the preferences of BL chunks and EL chunks are consistent in terms of sum-distortion. In Sec. \ref{subsec:performance-comparison}, we will conduct performance comparison between these two algorithms and validate our conclusion.
\section{Performance Evaluation}\label{sec:evaluation}
We carry out simulations to evaluate performance of the proposed Supcast under various scenarios. For a comprehensive evaluation over videos with different spatial-temporal content complexities, we take multiple standard reference video sequences from the Xiph \cite{Xiph} as our test sequences, including 'Bus', 'Coastguard', 'Crew', 'Foreman', 'Harbour', 'Husky' and 'Ice'. For fair comparison with Softcast, the luminance component of these video sequences is also extracted to generate monochrome versions in the simulation. For color versions, we can process chrominance components as the same as luminance components. Same as Softcast, the used video sequences are with common intermediate format (CIF) resolution and the frame rate of 30 fps (frame per second). In this case, the source bandwidth $BW_s=1.52$ MHz (in complex symbols). The GOP size is set as 4 and the default equal chunk division after 3D-DCT is set as $8\times 8$ chunks per frame. The peak signal-to-noise ratio (PSNR) is adopted as the performance metric, which can be expressed as
\begin{equation}
{\rm PSNR} = 10\log_{10}(255^2/{\rm MSE}),
\end{equation}
where the distortion MSE is averaged over all pixels in a frame. Besides, the PSNR of each video sequence is the average PSNR of frames.
\par

\par
In simulations, a single-cell NOMA system is considered with one BS located in the cell center, and ten heterogeneous multicast users equally partitioned into two distinct zones. Five near users are deployed within a ring between radius 100 meters and 500 meters, while another five far users are deployed within a ring between radius 500 meters and 900 meters. In each zone, all users are uniformly and randomly distributed. The average power budget for transmitting each chunk is set as $P=1$ W. The path-loss exponent is set as $\eta=2$. As discussed in Sec. \ref{subsec:video-reconstruction}, the AWGN of the fading channel for each user has the variance $\sigma_w^2$. Hence, the average channel signal-to-noise ration (SNR) is defined as $10\log_{10}(P/\sigma_w^2)$. I-Q modulation is adopted in the physical layer, which means each channel use can convey one symbol as I component and one symbol as Q component. These two symbols compose a complex symbol. We define the bandwidth compression ratio as $\beta=BW_c/BW_s$, where $BW_c$ is the channel bandwidth (in complex symbols), i.e., the number of available channel use. Without specific instructions, the default setting of $\beta$ is 0.5.

\subsection{Performance Comparison}\label{subsec:performance-comparison}
We first compare performance of the proposed Supcast against two reference schemes. One is Softcast developed in OMA which discards chunks with smallest variances in the case of insufficient bandwidth. This implies that half chunks will not be transmitted when bandwidth compression ratio $\beta=0.5$. Another scheme is referred as NOMA-RA, where BL chunks and EL chunks are randomly scheduled for superposition in NOMA. In this scheme, power of each chunk is allocated in the way the same as in Softcast.
\begin{figure}[htb]
\vspace{-0.35cm}
\centering
\includegraphics[width=6.5cm]{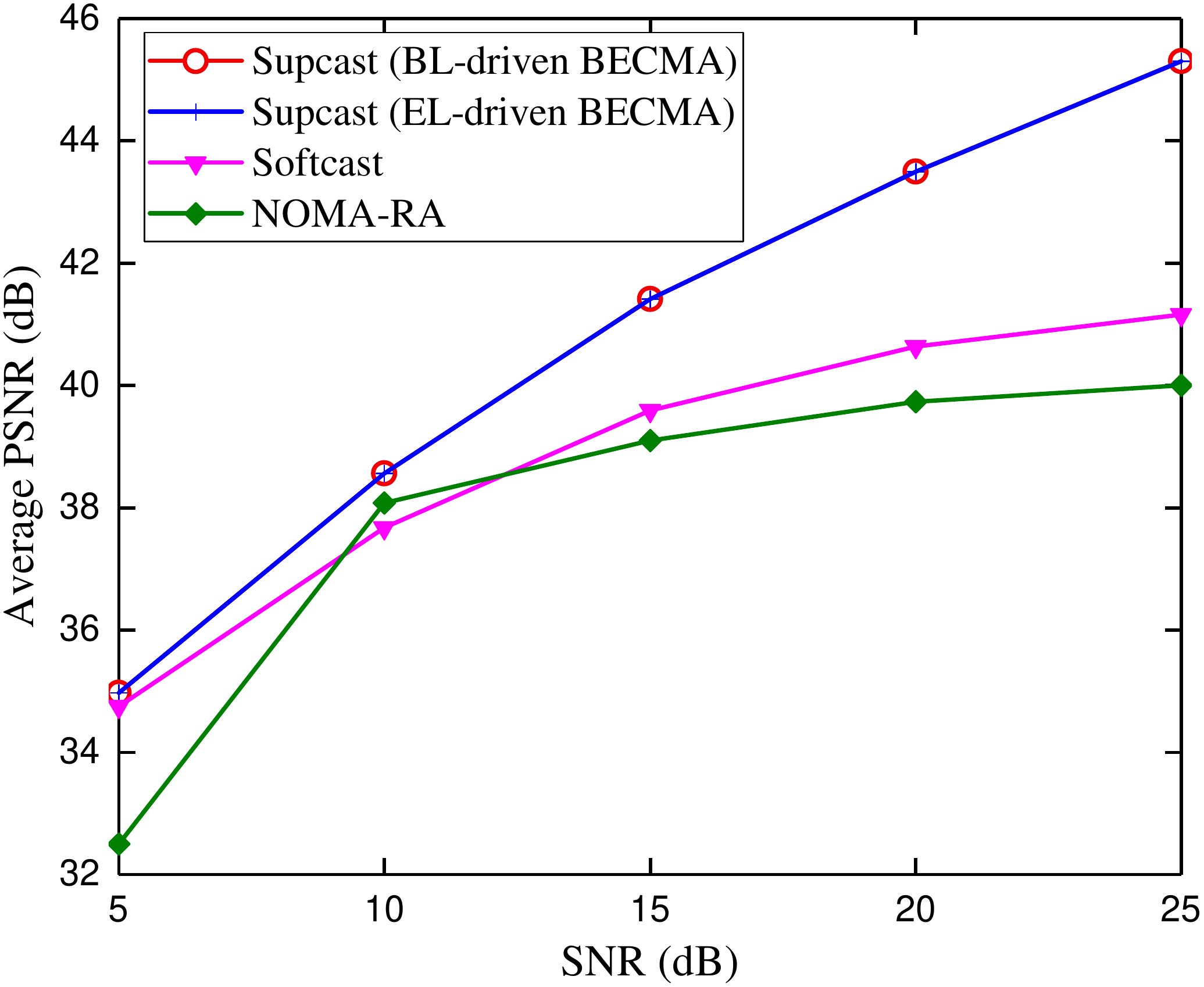}
\vspace{-0.2cm}
\caption{Average PSNR performance comparison among different schemes, $\beta=0.5$, GOP $=4$, 64 chunks/frame.}\label{figure:system-performance}
\vspace{-0.2cm}
\end{figure}
\par
As mentioned in Sec. \ref{subsec:algorithm-analysis}, the BECMA expressed in Algorithm 1 is actually a BL-driven chunk scheduling algorithm, where BL chunks make the proposal. By setting EL chunks as the proposer, we can generate an EL-driven algorithm. In simulations, we implement Supcast with both two BECMA algorithms. For these schemes, system performance is evaluated by averaging PSNR over all test sequences, as Fig. \ref{figure:system-performance} shows. Note that performance of Supcast with BL-driven BECMA is similar to that with EL-driven BECMA. This result is consistent with the conclusion in Remark 1, since formulated preferences of BL chunks and EL chunks are consistent in terms of sum-distortion.

\par
As shown in Fig. \ref{figure:system-performance}, Supcast outperforms Softcast and NOMA-RA over the entire range of SNR, and achieves gains up to 3 dB in terms of average PSNR. When SNR is low at 5 dB, performance of Supcast is similar with Softcast. This is due to the fact that the poor channel condition can only support transmission of all BL chunks and very few EL chunks with SC. In this case, most EL chunks are allocated with no power in the power re-allocation stage according to Eq. (\ref{eq:power-reallocation-solution}). In contrast, without two-stage power allocation, NOMA-RA has poor performance, since reconstruction performance of BL chunks degrades with severe interference of randomly superposed EL chunks. At medium and high SNR, performance gains of Supcast enlarge accordingly, while both NOMA-RA and Softcast start the effect of performance saturation. However, saturation reasons of two schemes are totally different. The former is due to severe interference of SC cannot be well alleviated in NOMA-RA with random scheduling and simple power allocation strategy. The latter is because distortion of discarding half chunks in Softcast is nonrecoverable.

\begin{table}[htb]
%\vspace{-0.2cm}
\centering
\caption{PSNR of each video sequence under different schemes.}\label{tab:per-video}
\begin{tabular}{|c|C{0.7cm}|C{0.7cm}|C{0.7cm}|C{0.6cm}|C{0.6cm}|C{0.6cm}|}
     \hline
     %{\makecell{\diagbox{Group \\ index}{SNR}}} &  & \\

     \multirow{2}*{{\diagbox{Video}{SNR}}} & \multicolumn{2}{c|}{Supcast} & \multicolumn{2}{c|}{Softcast} & \multicolumn{2}{c|}{NOMA-RA}\\
     \cline{2-7}  & 15dB & 25dB & 15dB & 25dB & 15dB & 25dB\\
     \hline
     Bus          & 38.77 & 42.85  & 37.03 & 38.62 & 37.08 & 37.89\\
     \hline
     Coastguard   & 43.68 & 47.72  & 42.05 & 43.69 & 41.89 & 42.69\\
     \hline
     Crew         & 45.96 & 50.26  & 44.54 & 46.38 & 43.67 & 44.59\\
     \hline
     Foreman      & 43.77 & 47.80  & 42.13 & 43.82 & 41.17 & 41.99\\
     \hline
     Harbour      & 41.80 & 45.88  & 40.28 & 42.00 & 40.01 & 40.76\\
     \hline
     Husky        & 32.26 & 35.48  & 28.88 & 29.46 & 28.55 & 29.04\\
     \hline
     Ice          & 43.62 & 47.78  & 42.21 & 44.12 & 41.9 & 42.79\\
     \hline
     \bf{ Average} & \bf 41.41 & \bf 45.39 & \bf 39.59 & \bf 41.16 & \bf 39.18 & \bf 39.96\\
     \hline
\end{tabular}
%\vspace{-0.05cm}
\end{table}

\par
Table \ref{tab:per-video} provides PSNR performance of each video sequence under different schemes. Due to space limits, we only show results when SNR is 15 dB and 25 dB. The results show that Supcast can achieve performance gains over all test sequences. For the integrity of evaluation, we also compare the proposed BECMA scheduling with the optimal user scheduling through exhaustive search, as Fig. \ref{figure:exhaustive} shows. Since the computational complexity of the optimal exhaustive search is $\mathcal{O}(M!)$, we perform the simulation by setting the GOP size as 1 and dividing coefficients into 16 chunks. The results in Fig. \ref{figure:exhaustive} illustrate that the performance of the BECMA approaches the upper bound generated from exhaustive search.
\begin{figure}[htb]
\vspace{-0.2cm}
\centering
\includegraphics[width=5.5cm]{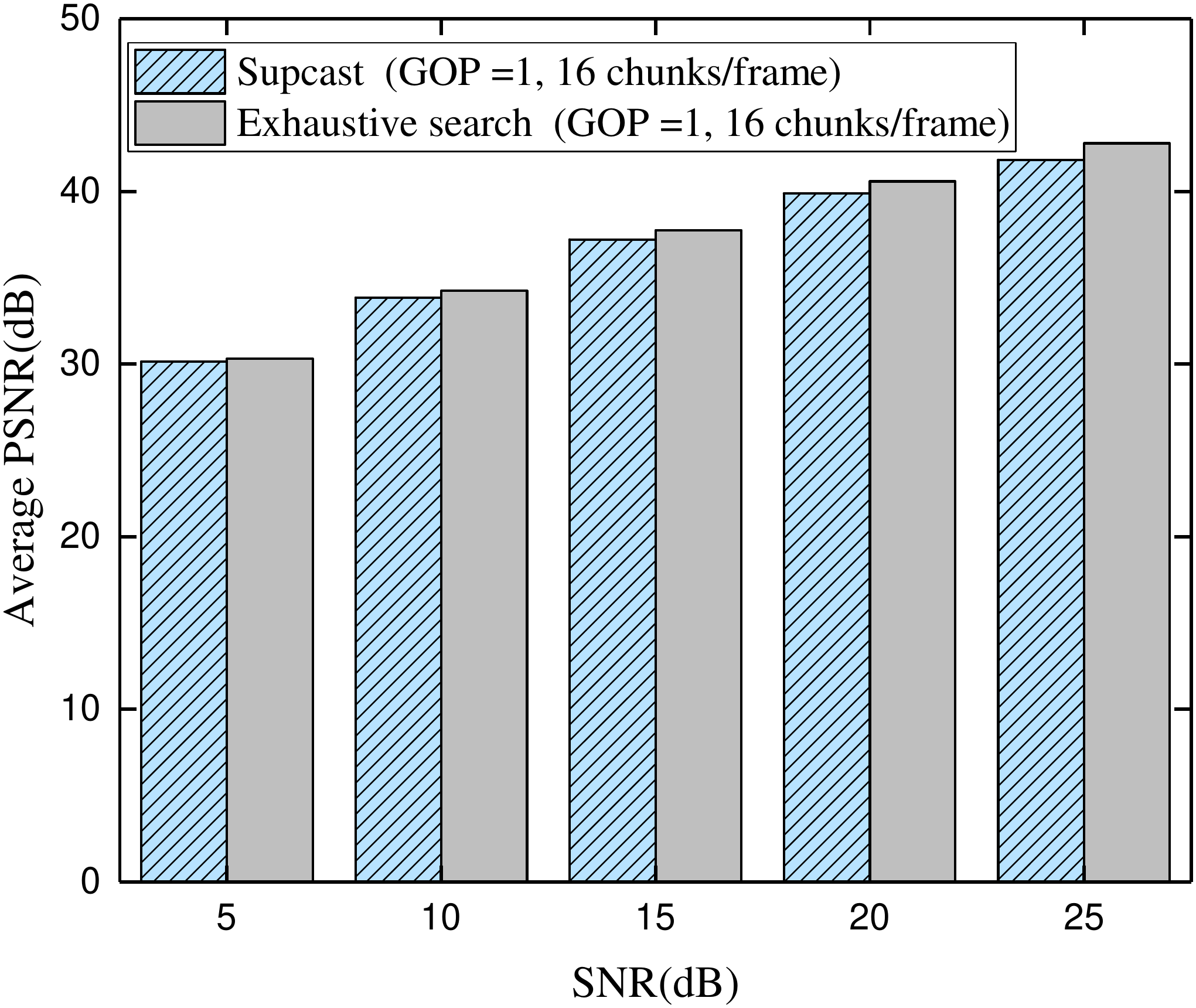}
\vspace{-0.2cm}
\caption{PSNR comparison between the BECMA scheduling and the optimal exhaustive search scheduling, $\beta=0.5$, GOP $=1$, 16 chunks/frame.}\label{figure:exhaustive}
\vspace{-0.2cm}
\end{figure}

\subsection{Impacts of Bandwidth Compression Ratio $\beta$}\label{subsec:compression-ratio}
Although Supcast has the ability to delivery twice chunks than Softcast owing to SC in NOMA, it has to discard some least important chunks before bisecting BL and EL chunks when bandwidth is severely insufficient. This leads to varying variance disparities between BL chunks and EL chunks in different bandwidth compression cases, which would accordingly affect the results of power allocation and chunk scheduling. Hence, we conduct simulations under different values of $\beta$, as illustrated in Fig. \ref{figure:peruser-performance}.

\par
From Fig. \ref{figure:peruser-performance} we can observe that Supcast at $\beta=0.5$ shows graceful degradation with varying SNR settings, while performance of Supcast at $\beta=0.25$ and performance of Softcast begin saturating at high SNR. This can be explained as follows. At $\beta=0.5$, all chunks can be transmitted and decoded by near users in Supcast. However, half chunks would be dropped in Softcast when $\beta=0.5$ and in Supcast when $\beta=0.25$, especially only quarter chunks can be transmitted in Softcast at $\beta=0.25$. Distortion of discarding information cannot be recovered at receivers. Thus, performance of these schemes is bounded at better channel conditions.

\begin{figure}[htb]
\vspace{-0.2cm}
\centering
\includegraphics[width=5.5cm]{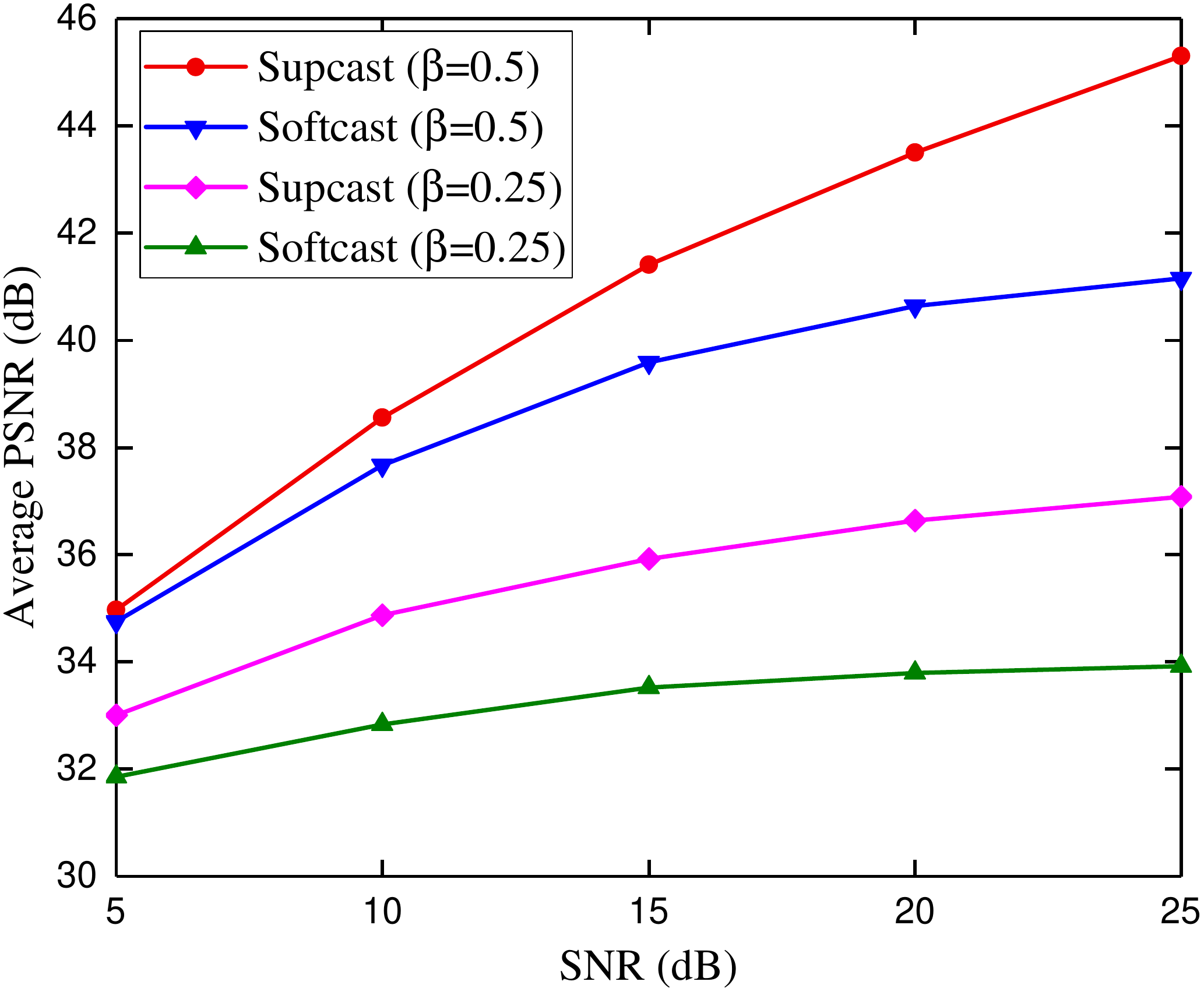}
\vspace{-0.2cm}
\caption{PSNR performance at different settings of bandwidth compression ratio $\beta$, GOP $=4$, 64 chunks/frame.}\label{figure:peruser-performance}
\vspace{-0.2cm}
\end{figure}

\par
Another interesting observation is compared with Softcast, Supcast can obtain higher performance gains at $\beta=0.25$ than $\beta=0.5$, when channel SNR is low. The reason can be expressed as follows. At $\beta=0.5$, poor channel condition can only support transmission of BL chunks and few superposed EL chunks. Thus the advantage of NOMA cannot be fully utilized, which induces   Supcast and Softcast have similar performance. However, at $\beta=0.25$, power is intensively allocated among transmitted chunks by sacrificing the transmission opportunity of some least important chunks. In this case, more EL chunks can be superposed and decoded by near users, even when channel condition is poor. This brings about remarkable gains for Supcast.

\subsection{Impacts of Chunk Size}\label{subsec:chunk-size}

We also investigate impacts of chunk size on performance of the proposed Supcast, since power allocation and scheduling are both carried out over chunks. DCT coefficients of each frame are divided into $N_c\times N_c$ chunks. In the simulation, we set $N_c$ as 4, 8 and 16, which produce a total number of chunks per frame as 16, 64 and 256, respectively. The results are depicted in Fig. \ref{figure:chunk_size}. Obviously, performance improves gradually with the increasing number of divided chunks. This is due to the fine-grained power allocation and chunk scheduling, which can accordingly produce more power scaling gains and superposition gains. Besides, the {\em marginal effect} would occur, which means the performance gap between the division of 256 chunks and 64 chunks is far less than the gap between the division of 64 chunks and 16 chunks.
\begin{figure}[htb]
%\vspace{-0.2cm}
\centering
\includegraphics[width=5.5cm]{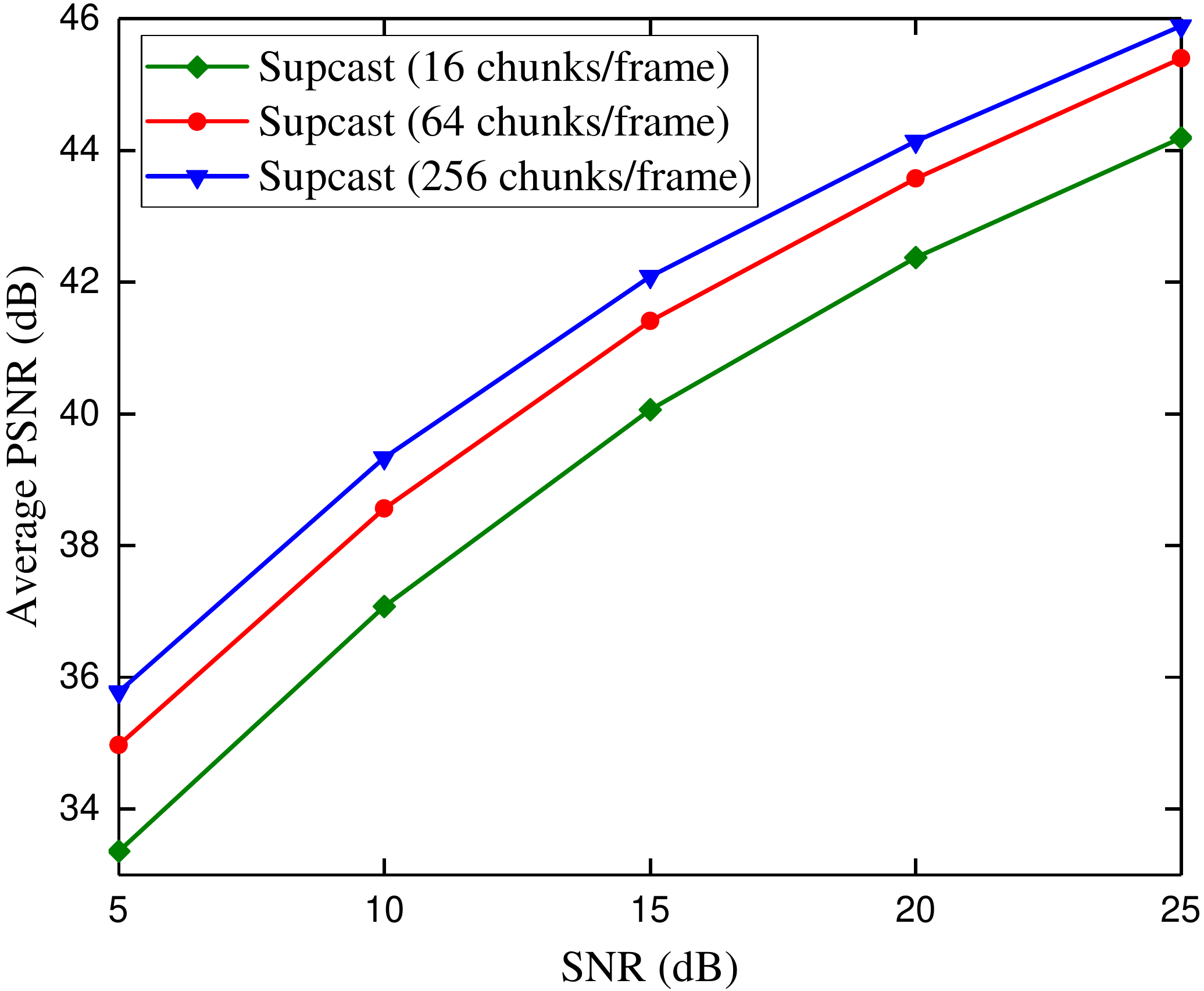}
\vspace{-0.2cm}
\caption{PSNR performance of Supcast under various settings of chunk division, $\beta=0.5$, GOP $=4$.}\label{figure:chunk_size}
\vspace{-0.2cm}
\end{figure}

\par
The above observations can provide guidelines for chunk division in Supcast, which implies the division of 64 chunks per frame can strike a good balance between performance and complexity. Too fine division would cause heavy computational burden for chunk scheduling, according to the analysis in Sec. \ref{subsec:algorithm-analysis}. Moreover, the huge overhead for transmitting meta data, including power scaling vectors and chunk scheduling vectors, is undesired when the number of chunks is too large. In this case, improved performance is not worthy due to the marginal effect. It is worth noting that the choice of appropriate chunk division should be adaptively adjusted according to the GOP size and the resolution of video sequences.

\section{Conclusions}\label{sec:conclusion}
In this paper, we presented Supcast, a novel receiver-driven scheme for video multicast in NOMA systems with heterogeneous channel conditions. The design of Supcast is motivated jointly by high spectral efficiency achieved with NOAM and robustness transmission of Softcast in heterogeneous environments. By grouping DCT coefficients into BL chunks and EL chunks for superposition, Supcast enables near users to decode both BL and EL chunks while far users to decode only BL chunks. Furthermore, inspired by Softcast, Supcast allows each heterogeneous user to obtain video with quality proportional to its channel quality. The key features of Supcast include power allocation and chunk scheduling to minimize video transmission distortion. Based on characteristics of chunks and interference formulation in decoding, we proposed a two-stage power allocation strategy. By reformulating chunk scheduling problem as a one-to-one two-sided matching problem, a near-optimal and low-complexity BECMA scheduling algorithm is proposed. Simulation results have shown remarkable performance improvements of Supcast over existing schemes.

\section*{Acknowledgements}
%If you'd like to thank anyone, place your comments here
%and remove the percent signs.
This work was supported in part by the National Science Foundation of China (NSFC) (Grants 91538203, 61390513, 61631017 and 61771445),
and the Fundamental Research Funds for the Central Universities.

\ifCLASSOPTIONcaptionsoff
  \newpage
\fi

% trigger a \newpage just before the given reference
% number - used to balance the columns on the last page
% adjust value as needed - may need to be readjusted if
% the document is modified later
%\IEEEtriggeratref{8}
% The "triggered" command can be changed if desired:
%\IEEEtriggercmd{\enlargethispage{-5in}}

% references section

% can use a bibliography generated by BibTeX as a .bbl file
% BibTeX documentation can be easily obtained at:
% http://mirror.ctan.org/biblio/bibtex/contrib/doc/
% The IEEEtran BibTeX style support page is at:
% http://www.michaelshell.org/tex/ieeetran/bibtex/
%\bibliographystyle{IEEEtran}
% argument is your BibTeX string definitions and bibliography database(s)
%\bibliography{IEEEabrv,../bib/paper}
%
% <OR> manually copy in the resultant .bbl file
% set second argument of \begin to the number of references
% (used to reserve space for the reference number labels box)
%\small
\footnotesize

% biography section
%
% If you have an EPS/PDF photo (graphicx package needed) extra braces are
% needed around the contents of the optional argument to biography to prevent
% the LaTeX parser from getting confused when it sees the complicated
% \includegraphics command within an optional argument. (You could create
% your own custom macro containing the \includegraphics command to make things
% simpler here.)
%\begin{IEEEbiography}[{\includegraphics[width=1in,height=1.25in,clip,keepaspectratio]{mshell}}]{Michael Shell}
% or if you just want to reserve a space for a photo:

%\begin{IEEEbiography}{Michael Shell}
%Biography text here.
%\end{IEEEbiography}

%% if you will not have a photo at all:
%\begin{IEEEbiographynophoto}{John Doe}
%Biography text here.
%\end{IEEEbiographynophoto}
%
%% insert where needed to balance the two columns on the last page with
%% biographies
%%\newpage
%
%\begin{IEEEbiographynophoto}{Jane Doe}
%Biography text here.
%\end{IEEEbiographynophoto}

% You can push biographies down or up by placing
% a \vfill before or after them. The appropriate
% use of \vfill depends on what kind of text is
% on the last page and whether or not the columns
% are being equalized.

%\vfill

% Can be used to pull up biographies so that the bottom of the last one
% is flush with the other column.
%\enlargethispage{-5in}

% that's all folks
\end{document}